\newtheorem{theorem}{\bf{Theorem}}[section]
\newtheorem{lem}[theorem]{Lemma}
\theoremstyle{plain}
\newcounter{algno} 
\newenvironment{definition}[1][Definition]{\begin{trivlist}
\item[\hskip \labelsep {\bfseries #1}]}{\end{trivlist}}
\newenvironment{remark}[1][Remark]{\begin{trivlist}
\item[\hskip \labelsep {\bfseries #1}]}{\end{trivlist}}
\newcounter{asno} 
\title{\LARGE \bf Resilience of Locally Routed Network Flows: More Capacity is Not Always Better
}
\author{A. Yasin Yaz{\i}c{\i}o\u{g}lu, Mardavij Roozbehani, and Munther A. Dahleh \\
Laboratory for Information and Decision Systems\\
Massachusetts Institute of Technology, Cambridge, MA 02139, USA\\
{\tt\small yasiny@mit.edu, mardavij@mit.edu, dahleh@mit.edu}
%\thanks{ A. Yasin Yaz{\i}c{\i}o\u{g}lu is with the Laboratory for Information and Decision Systems, Massachusetts Institute of Technology, {\tt yasiny@mit.edu}. \newline \indent
%Magnus Egerstedt is with the School of Electrical and Computer Engineering, Georgia Institute of Technology, {\tt magnus@gatech.edu}.\newline \indent
%Jeff S. Shamma is with the School of Electrical and Computer Engineering, Georgia Institute of Technology, {\tt shamma@gatech.edu}, and with King Abdullah University of Science and Technology (KAUST), {\tt jeff.shamma@kaust.edu.sa.}}
}
\begin{document}
\maketitle
\begin{abstract}
%In this paper, we consider dynamic network flows, where a constant external inflow is injected to the network and the total inflow arriving at each node is routed based only on the capacities of its outgoing links. In this setting, a link becomes irreversibly inactive  if it is overloaded or if there is no active links in its immediate downstream to carry its flow. We show that the 
%link capacities do not have a monotonic influence on the resilience of  such network flows due to the lack of global information in routing decisions. Furthermore, the failures that arise from local routing decisions can actually be avoided by appropriately reducing the link capacities. 
%
%In this paper, we consider dynamic network flows, where a constant external inflow is injected to the network and the total inflow arriving at each node is routed locally such that any of the outgoing links is overloaded if and only if the total inflow is larger than their total available capacity.

%Accordingly, all of the outgoing links of a node are overloaded (fail)if the node receives an inflow greater than the total outgoing capacity

%In this paper, we showed that the link capacities do not have a monotonic influence on the resilience of locally routed network flows with finite link capacities. In this setting, an external inflow is injected to the origin nodes, and the total inflow arriving at each node is routed vi. The local routing policies were defined as families of functions, 

In this paper, we are concerned with the resilience of locally routed network flows with finite link capacities. In this setting, an external inflow is injected to the so-called origin nodes. The total inflow arriving at each node is routed locally such that none of the outgoing links are overloaded unless the node receives an inflow greater than its total outgoing capacity. A link irreversibly fails if it is overloaded or if there is no operational link in its immediate downstream to carry its flow. For such systems, resilience is defined as the minimum amount of reduction in the link capacities that would result in the failure of all the outgoing links of an origin node. We show that such networks do not necessarily become more resilient as additional capacity is built in the network. Moreover, when the external inflow does not exceed the network capacity, selective reductions of capacity at certain links can actually help averting the cascading failures, without requiring any change in the local routing policies. This is an attractive feature as it is often easier in practice to reduce the available capacity of some critical links than to add physical capacity or to alter routing policies, e.g., when such policies are determined by social behavior, as in the case of road traffic networks. The results can thus be used for real-time monitoring of distance-to-failure in such networks and devising a feasible course of actions to avert systemic failures.

\end{abstract}

%These systems are typically modeled using the framework of network flows (e.g., \cite{Ahuja93}).

\section{Introduction}
\label{intro}
Resilience is a critical aspect in the design and operation of infrastructure systems
such as transportation, power, water, and communication networks. In many applications, the network faces various disturbances and operates under the actions taken by agents with limited information about the system. Such an operation of the network often results in a suboptimal global performance (e.g., \cite{Roughgarden02}), and it may even lead to cascading failures with severe systemic consequences (e.g., \cite{Como13b,Savla14}). Accordingly, there has been a significant research interest in modeling cascading failures on networks (e.g., \cite{Granovetter78,Adler91,Liggett12, Watts02, Dobson05}), investigating the influence of network topology on failure propagation (e.g., \cite{Blume11,Acemoglu12}), and designing networks (e.g., \cite{Ash07,Yasin15TNSE}) and control policies (e.g, \cite{Como13b,Savla14,Motter04,Varaiya11}) for resilient operation. 

%Severe systemic consequences of such failures have attracted 

% Many models of cascading failures have been proposed in the literature (e.g., \cite{Granovetter78,Adler91,Liggett12,Callaway00,Motter02, Watts02, Crucitti04, Lorenz09,Pastor15}). Formulations based on thresholds, percolation, or interacting particles are widely used to model spreading behavior in social, biological, financial, and communication networks (e.g., \cite{Keeling05,Smart08, Gai10, Serazzi04}). On the other hand,  cascading failures in physical networks typically require customized models that incorporate the corresponding principles of operation (\cite{Dobson05,}). 

%For instance, the successive nodes/links to fail are not necessarily adjacent in failures on power grids and transportation networks (see, e.g., \cite{Bernstein14}), which is different from the standard epidemic models. To overcome such limitations, more specific models of cascading failures have recently been proposed for infrastructure networks. 
 In this paper, we focus on locally routed network flows with finite link capacities. We consider a dynamical model that is similar to the ones presented in \cite{Como13b,Savla14}, where the authors investigate the resilience of network flows under some local routing policies. In this setting, a constant external inflow is injected to the network, and the inflow at each node is routed locally such that none of the outgoing links are overloaded unless the node receives an inflow greater than its total outgoing capacity.
A link irreversibly fails (i.e., its capacity drops to zero) if it is overloaded or if there is no operational link in its immediate downstream to carry its flow. Accordingly, a cascade of failures can be induced by a sufficiently large disturbance that reduces the link capacities. However, we show in this paper that the link capacities do not have a monotonic influence on the \textcolor{black}{resilience} of locally routed network flows. Furthermore, cascading failures can be attributed to myopic routing decisions when the external inflow does not exceed the network capacity, and they can actually be avoided by reducing the capacities of some critical links without requiring any alteration of the local routing policies. Such a counterintiutive influence of the link capacities stems from the lack of global information rather than the lack of cooperative decisions captured by the Braess' paradox \cite{Braess05}. The results of this paper pave the way for designing dynamic capacity allocation policies to avoid the systemic failures due to local routing in network flows.

The organization of this paper is as follows: Section \ref{prelim} provides some preliminaries. Section \ref{model} presents the proposed model of locally routed network flows. Section \ref{main} presents our main results. Finally, Section \ref{conclusion} concludes the paper.

\section{Preliminaries}
\label{prelim}

\subsection{Notation}
For any finite set $A$ with cardinality $|A|$, we use $\mathbb{R}^A$ ($\mathbb{R}_{+}^A$, $\mathbb{R}_{++}^A$) to denote 
the space of real-valued (nonnegative-real-valued, positive-real-valued) ${|A|-\mbox{dimensional}}$ vectors whose components are indexed by the elements of $A$. Accordingly, for any $a\in A$ and $x\in\mathbb{R}^A$, $x_a \in\mathbb{R}$ denotes the corresponding entry of $x$, and we define $x_{a'} = 0$ for all $a' \notin A$. Similarly, for any $A' \subseteq A$, we use $x_{A'} \in\mathbb{R}^{A'}$ to denote the $|A'|$-dimensional vector consisting only of the components of $x$ whose indices are in $A'$. \textcolor{black}{For any pair of vectors $\underaccent{\bar}{x},\bar{x} \in \mathbb{R}^A$, we use $[\underaccent{\bar}{x},\bar{x}]$ to denote the set of vectors $x\in \mathbb{R}^A$ such that $\underaccent{\bar}{x} \leq x \leq \bar{x}$. } The all-one and all-zero vectors, their size being clear from the context, will be denoted by $\bold{1}$ and $\bold{0}$, respectively. 

%, and we define $x_\emptyset = \bold{0}$. 

\subsection{\textcolor{black}{Graph basics}}
A directed graph, $\mathcal{G}=(\mathcal{V},\mathcal{E})$, consists of a set of nodes, $\mathcal{V}$, and a set of edges, $\mathcal{E} \subseteq \mathcal{V} \times \mathcal{V}$, given by ordered pairs of nodes. A graph is a multi-graph if multiple edges are allowed between the nodes, i.e., if $\mathcal{E}$ is a multi-set. Each ${(v,w)\in\mathcal{E}}$ denotes a link from $v$ (the tail) to $w$ (the head). For each $v \in \mathcal{V}$, we use $\mathcal{E}_v^-$ and $\mathcal{E}_v^+$ to denote the corresponding sets of incoming and outgoing links, respectively. Similarly, for any $\mathcal{U} \subseteq \mathcal{V}$, 
\begin{equation}
\label{uminus}
\mathcal{E}_\mathcal{U}^-= \{(v,w)\in \mathcal{E} \mid v \notin\mathcal{U},  w \in\mathcal{U} \},
%\mathcal{V}_v^+= \{w\in \mathcal{V} \mid \mathcal{E}_v^+ \cap \mathcal{E}_w^- \neq \emptyset  \}.
\end{equation}
\begin{equation}
\label{uplus}
\mathcal{E}_\mathcal{U}^+= \{(v,w)\in \mathcal{E} \mid v \in\mathcal{U},  w \notin\mathcal{U} \}.
%\mathcal{V}_v^+= \{w\in \mathcal{V} \mid \mathcal{E}_v^+ \cap \mathcal{E}_w^- \neq \emptyset  \}.
\end{equation}
Also, for  any $v \in \mathcal{V}$, we use $\mathcal{V}_v^-$ and $\mathcal{V}_v^+$ to denote the tails of incoming links and the heads of the outgoing links, respectively.

A path is a sequence of nodes such that for any two nodes $v,w \in \mathcal{V}$ that are consecutive in  the sequence, $w \in \mathcal{V}_v^+$. A directed graph is acyclic if, for every node $v$, there exists no feasible path that starts and ends at $v$.

\begin{remark}:
\label{monlab} Any acyclic \textcolor{black}{ graph} of $n$ nodes can be represented as $\mathcal{G}=(\mathcal{V},\mathcal{E})$ such that $V= \{1, 2, \hdots, n\}$ and 
\begin{equation}
\label{monlab1}
\mathcal{E}_v^- \subseteq \bigcup_{1\leq u < v}\mathcal{E}_u^+, \; \forall v \in \mathcal{V}.
\end{equation} 
%\begin{equation}
%\label{monlab2}
%v_o<v_i<v_d, \; \forall v_o \in \mathcal{V}_O, v_i \in \mathcal{V}_I, v_d \in\mathcal{V}_D.
%\end{equation} 
In the remainder of this paper, such a labeling of the nodes will be assumed whenever an acyclic  \textcolor{black}{ graph} is considered.  
\end{remark}

\section{System Model}
\label{model}
In this section, we provide some definitions and the dynamical model of locally routed network flows with capacity constraints.

\begin{definition} (\emph{Flow Network}):  A flow network is a directed multi-graph $\mathcal{G}=(\mathcal{V},\mathcal{E})$ such that there exists a path from any node to some destination node $v \in \mathcal{V}_D$, where 
\begin{equation}
\label{vd}
\mathcal{V}_D = \{v\in \mathcal{V} \mid \mathcal{E}_v^+ = \emptyset \}.
\end{equation}
Furthermore, the set of origin nodes $\mathcal{V}_O$ is defined as
\begin{equation}
\label{vo}
\mathcal{V}_O = \{v\in \mathcal{V} \mid \mathcal{E}_v^- = \emptyset \}.
\end{equation}
The nodes that are neither an origin nor a destination constitute the set of intermediate nodes  $\mathcal{V}_I =\mathcal{V} \setminus \mathcal{V}_O \setminus \mathcal{V}_D$.
\end{definition}

\begin{definition} (\emph{Local Routing Policy}):  %locally responsive routing
Given a flow network $\mathcal{G}=(\mathcal{V}, \mathcal{E})$, a local routing policy $\mathcal{R}$ is a family of functions 
\begin{equation}
\label{lrp1}
\mathcal{R}^v: \mathbb{R}_{+}^{\mathcal{E}_v^+} \times \mathbb{R}_{+} \mapsto \mathbb{R}_{+}^{\mathcal{E}_v^+}, \; v \in \mathcal{V} \setminus \mathcal{V}_D, 
\end{equation}
such that, for any vector of link capacities $C \in \mathbb{R}_{+}^{\mathcal{E}} $ and vector of inflows $\mu  \in \mathbb{R}_{+}^\mathcal{V} $,

\begin{equation}
\label{lrp2}\sum_{e\in \mathcal{E}_v^+}\mathcal{R}^v_e(C_{\mathcal{E}_v^+},\mu_v)= \mu_v, 
%\sum_{e\in \mathcal{E}_v^+}\mathcal{R}^v_e(C_{\mathcal{E}_v^+},\mu_v) = \left\{\begin{array}{ll}\mu_v, & \mbox{ if } \bold{1}^\mathsf{T}  C_{\mathcal{E}_v^+}>0, \\ 0, & \mbox{ o.w. }\end{array}\right. 
\end{equation}
%\textcolor{black}{and, for every  $e \in \mathcal{E}_v^+$,}
\begin{equation}
\label{lrp4}
% \mathcal{R}^v(C_J,\mu_v) \left\{\begin{array}{ll} < C_J &\mbox{ if $\mu_v < \sum_{e\in J} C_e$}, \\
% \geq C_J & \mbox{ otherwise. }\end{array}\right.
\textcolor{black}{\mu_v \leq \bold{1}^\mathsf{T} C_{\mathcal{E}_v^+} \Rightarrow
\mathcal{R}^v(C_{\mathcal{E}_v^+}, \mu_v) \leq C_{\mathcal{E}_v^+}.   }
\end{equation}
\end{definition}
%see (\ref{cset}),(\ref{opt})

 Local routing policies as defined above map the local inflows and link capacities to the corresponding equilibrium outflows. Accordingly, any local routing policy $\mathcal{R}$ can be considered as a family of functions such that each $\mathcal{R}^v(C_{\mathcal{E}_v^+},\mu_v)$ is an optimizer of some local objectives when the conservation of flow and the local capacity constraints admit feasible outflows.

\emph{Example (Proportional routing)}: One example of a local routing policy is the proportional routing, that is, for any ${v \in \mathcal{V} \setminus \mathcal{V}_D}$ and $e \in \mathcal{E}_v^+$
\begin{equation}
 \label{propor}
\mathcal{R}^v_e(C_{\mathcal{E}_v^+},\mu_v) =\dfrac{\mu_v C_e}{\bold{1}^\mathsf{T}  C_{\mathcal{E}_v^+} }, \forall C_{\mathcal{E}_v^+} \neq \bold{0}. 
 \end{equation}
For instance, in a transportation network, if each driver myopically prefers the least congested (smallest ratio of flow to capacity) link in the immediate downstream, then any feasible inflow (i.e., ${\mu_v \leq \bold{1}^\mathsf{T} C_{\mathcal{E}_v^+}}$) to a non-destination node results in a unique equilibrium outflow as given by the proportional routing.

\begin{definition} (\emph{Locally Routed Network Flow}): A locally routed network flow  consists of a flow network $\mathcal{G}= (\mathcal{V},\mathcal{E})$, a vector of link capacities $C \in \mathbb{R}_{+}^\mathcal{E}$, a local routing policy $\mathcal{R}$, and a constant external inflow $\lambda \in \mathbb{R}_{+}^{\mathcal{V}_O}$. Accordingly, we use $(\mathcal{G}, C, \mathcal{R}, \lambda)$ to denote a dynamical system with the state $(C(t), f(t))$ , where $C(t) \in \mathbb{R}_{+} ^ \mathcal{E}$ denotes the available link capacities and $f(t) \in \mathbb{R}_{+} ^ \mathcal{E}$ is the flow vector such that  
%%\lambda_v + \sum_{e \in \mathcal{E}_{v}^-}f_e(t), \; \forall v \in \mathcal{V},

\begin{equation}
\label{dyn0}
C(0) = C, \; f(0)= \bold{0},
\end{equation} 
\begin{equation}
\label{dyn1}
%\mu_w(t)= \lambda_w + \sum_{e \in \mathcal{E}_{w}^-}f_e(t), \; \forall w \in \mathcal{V},
\mu_w(t)= \left\{\begin{array}{ll}\lambda_w, & \mbox{ if  } w \in \mathcal{V}_O, \\ \sum_{e \in \mathcal{E}_{w}^-}f_e(t), & \mbox{ o.w. }\end{array}\right.,
\end{equation} 
and, for every $e=(v,w) \in \mathcal{E}$,
\begin{equation}
\label{dyn2}
%\resizebox{\columnwidth}{!}{$
\big(C_e(t+1), f_e(t+1)\big)=\left\{\begin{array}{l}(0,0), \mbox{ if  (\ref{fail1}) or (\ref{fail2}) holds}, \\ \big(C_e(t), \mathcal{R}^v_e(C_{\mathcal{E}_v^+}(t),\mu_v(t)) \big), \mbox{ o.w. }\end{array}\right.
%$}
\end{equation}
where the link-failure conditions are
\begin{equation}
\label{fail1}
 \mathcal{R}^v_e(C_{\mathcal{E}_v^+}(t),\mu_v(t)) > C_e(t),
\end{equation}
\begin{equation}
\label{fail2}
f_e(t)>0, \quad \bold{1}^\mathsf{T} C_{\mathcal{E}_w^+}(t) =0.
\end{equation}

\end{definition}

For any flow network $\mathcal{G}=(\mathcal{V},\mathcal{E})$, ${C \in \mathbb{R}_{+}^\mathcal{E}}$, and $\lambda \in \mathbb{R}_{+}^{\mathcal{V}_O}$,  the set of feasible balanced flows, $\mathcal{F}_\mathcal{G} (C,\lambda )$, consists of feasible flow vectors for which the inflow is equal to the outflow for every non-destination node, i.e., 
\begin{multline}
\label{fb}
\mathcal{F}_\mathcal{G} (C,\lambda ) = \{ f \in \mathbb{R}_{+} ^\mathcal{E} \mid f \leq C,  \\ \;  \lambda_v + \sum_{e \in \mathcal{E}_{v}^-}f_e = \sum_{e \in \mathcal{E}_v^+}f_e, \forall v \in \mathcal{V} \setminus \mathcal{V}_D\}.
\end{multline} 
Accordingly, 
\begin{equation}
\label{fbu}
\sum _{v \in \mathcal{U}}\lambda_v+ \sum_{e  \in \mathcal{E}_\mathcal{U} ^-}f_e = \sum_{e  \in \mathcal{E}_\mathcal{U} ^+}f_e,  \; \forall f \in \mathcal{F}_\mathcal{G} (C,\lambda ),  \forall \mathcal{U} \subseteq \mathcal{V} \setminus \mathcal{V}_D.
\end{equation} 
%Note that the total inflow to the destination nodes is equal to the total external inflow for any feasible balanced flow. 

%We say that a locally routed network flow is transferring if has such a limiting flow.

%In light of (\ref{fb}) and (\ref{fbu}),
%\begin{equation}
%\label{fbemp}
%\mathcal{F}_\mathcal{G} (C,\lambda) = \emptyset \Leftrightarrow \min_{\mathcal{U} \subseteq \mathcal{V} \setminus \mathcal{V}_D}\bold{1}^\mathsf{T}  C_{\mathcal{E}_\mathcal{U}^+} -\sum _{v \in \mathcal{U}}\lambda_v<0.
%\end{equation} 

%We say that a locally routed network flow is transferring if the total inflow to the destination nodes converges to the total external inflow. Otherwise, the system is non-transferring, and a cascade of failures disconnects at least one origin node from the network.

\begin{definition} (\emph{Transferring Network Flow}): 
A locally routed network flow $(\mathcal{G}, C, \mathcal{R}, \lambda)$ is transferring if ${\lim_{t \to \infty} (C(t),f(t))=(C^*,f^*)}$ such that 
\begin{equation}
\label{tr}
% \sum_{e \in \mathcal{E}_{v}^-}f_e(t)
f^* \in \mathcal{F}_\mathcal{G} (C^*,\lambda).
\end{equation} 
 Furthermore, such a system is called safely transferring if none of the links fail, i.e.,
\begin{equation}
\label{tr2}
C^* =C.
\end{equation}

% Furthermore, such a system is safely transferring if none of the links is overloaded, i.e., for every $e=(v,w) \in \mathcal{E}$
% \begin{equation}
%\label{tr2}
% \mathcal{R}^v_e(C_{\mathcal{E}_v^+}(t),f_{\mathcal{E}_v^+}(t),\mu_v(t)) \leq C_e(t), \; \forall t \geq 0.
%\end{equation} 
% \begin{equation}
%\label{tr2}
%\mathcal{R}^v_e(C_{\mathcal{E}_v^+}(t),f_{\mathcal{E}_v^+}(t),\mu_v(t)) \leq C_e(t), \;  \forall v \in \mathcal{V}\setminus \mathcal{V}_D, \forall e \in \mathcal{E}_v^+, \forall t \geq 0.
%\end{equation} 
%\begin{equation}
%\label{tr2}
%\mu_v(t)\leq \bold{1}^\mathsf{T}  C_{\mathcal{E}_v^+}(t), \; \forall v \in \mathcal{V} \setminus \mathcal{V}_D, \; \forall t \geq 0.
%\end{equation} 
%Furthermore, $(\mathcal{G}, C, \mathcal{R}, \lambda)$ is safely transferring if
%\begin{equation}
%\label{tr2}
% C(t)=C, \;  \forall t \geq 0.
%\end{equation} 
%
% Furthermore, $(\mathcal{G}, C, \mathcal{R}, \lambda)$ is safely transferring if for every $e=(v,w) \in \mathcal{E}$

%Furthermore, $(\mathcal{N}, \mathcal{R}, \lambda)$ is structurally non-transferring if there exists no local routing policy $\mathcal{R}'$ such that $(\mathcal{N}, \mathcal{R}', \lambda)$ is transferring. 
\end{definition}

\begin{definition} (\emph{Resilience}): 
The resilience of any locally routed network flow $(\mathcal{G}, C, \mathcal{R}, \lambda)$ is defined as the minimum amount of capacity reduction that would induce a non-transferring system, i.e.,
\begin{equation}
\label{res}
\min_{\delta \in \Delta(\mathcal{G},C,\mathcal{R},\lambda)} \bold{1}^\mathsf{T}\delta,
\end{equation} 
where $\Delta(\mathcal{G},C,\mathcal{R},\lambda)$ is
\begin{equation}
\label{dis}
\Delta(\bullet)= \{ \bold{0} \leq \delta \leq C \mid \mbox{($\mathcal{G},C-\delta,\mathcal{R},\lambda)$ is non-transferring}\}.
\end{equation}
As such, resilience quantifies the distance to systemic failures, and it is equal to zero for non-transferring networks.
 \end{definition} 

\section{Main Results}
\label{main}
In this section, we present our main results related to the influence of link capacities on locally routed network flows. In particular, we focus on acyclic networks and show the following:

\begin{enumerate}
%\item For locally routed network flows,  a non-transferring system can be induced by increasing the link capacities. 
%\item Increasing the link capacities does not necessarily improve the resilience of locally routed network flows. It is possible to induce a non-transferring system by increasing the link capacities. 
%\item Capacity increases can also induce a non-transferring locally routed network flow. 
\item The link capacities do not have a monotonic influence on the resilience of locally routed networks. For a broad family of such systems, a non-transferring network can be induced by increasing the capacities.
\item When the external inflow does not exceed the network capacity, cascading failures can actually be avoided via a selective reduction of the link capacities, without requiring any alteration of the local routing policies.
\end{enumerate}

We start our analysis by presenting a sufficient condition on the link capacities for ensuring a safely transferring system under any local routing policy. This sufficient condition will later be used in our main results, Theorems \ref{capincbad} and \ref{redt}.
\begin{lem}
\label{transt}
For any acyclic flow network  $\mathcal{G}= (\mathcal{V},\mathcal{E})$, link capacities $C \in \mathbb{R}_{+}^\mathcal{E}$, and external inflow $\lambda \in \mathbb{R}_{+}^{\mathcal{V}_O}$, if
\begin{equation}
\label{transt2}
\lambda_v \leq \bold{1}^\mathsf{T}  C_{\mathcal{E}_v^+}, \forall v \in  \mathcal{V}_O,
\end{equation} 
\begin{equation}
\label{transt1}
\bold{1}^\mathsf{T}  C_{\mathcal{E}_v^-} \leq \bold{1}^\mathsf{T}  C_{\mathcal{E}_v^+}, \forall v \in  \mathcal{V}_I,
\end{equation} 
then $(\mathcal{G}, C, \mathcal{R}, \lambda)$ is safely transferring for any local routing policy $\mathcal{R}$.
\end{lem}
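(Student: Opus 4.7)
The plan is to exploit the acyclic structure via the topological labeling of Remark \ref{monlab} and prove by induction on time that no link ever fails; convergence of the flow is then obtained by a second induction along the topological order.

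The core step is a simultaneous induction on $t \geq 0$ establishing three invariants: (i) $C(t) = C$, (ii) $f_e(t) \leq C_e$ for every $e \in \mathcal{E}$, and (iii) $\mu_v(t) \leq \bold{1}^\mathsf{T} C_{\mathcal{E}_v^+}$ for every $v \in \mathcal{V}\setminus\mathcal{V}_D$. The base case $t=0$ is immediate from $f(0)=\bold{0}$, $C(0)=C$, and hypothesis (\ref{transt2}). For the inductive step, invariant (iii) together with property (\ref{lrp4}) of the local routing policy gives $\mathcal{R}^v(C_{\mathcal{E}_v^+}(t),\mu_v(t)) \leq C_{\mathcal{E}_v^+}(t)$ componentwise, so the overload condition (\ref{fail1}) never triggers. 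To rule out the downstream-blockage condition (\ref{fail2}) for a link $e=(v,w)$ with $w \notin \mathcal{V}_D$, I would use that $f_e(t)>0$ forces $C_e>0$ by (ii), whence $\bold{1}^\mathsf{T} C_{\mathcal{E}_w^-} > 0$, and then (\ref{transt1}) yields $\bold{1}^\mathsf{T} C_{\mathcal{E}_w^+}(t) \geq \bold{1}^\mathsf{T} C_{\mathcal{E}_w^-} > 0$. With no failure triggered, invariants (i) and (ii) at $t+1$ follow directly from (\ref{dyn2}) and (\ref{lrp4}); invariant (iii) at $t+1$ holds for origins by (\ref{transt2}) and for intermediate nodes by summing (ii) over $\mathcal{E}_v^-$ and applying (\ref{transt1}).

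Once $C(t) \equiv C$ is in hand, convergence follows cheaply from acyclicity. I would prove by induction on the topological index $v$ that there is a $T_v < \infty$ such that $\mu_v(t)$ is constant for all $t \geq T_v$: origins satisfy this with $T_v=0$ since $\mu_v(t)=\lambda_v$, and if each predecessor $u<v$ has a stabilized inflow by time $T_u$, then $f_e(t+1) = \mathcal{R}^u_e(C_{\mathcal{E}_u^+},\mu_u(t))$ becomes constant for $t \geq \max_u T_u$ on every $e \in \mathcal{E}_v^-$, stabilizing $\mu_v$ one step later. Passing to the limit and using (\ref{lrp2}) at every non-destination node yields exactly the balance equations defining $\mathcal{F}_\mathcal{G}(C,\lambda)$, so $f^* \in \mathcal{F}_\mathcal{G}(C,\lambda)$ and the pair $(C,f^*)$ witnesses that the system is safely transferring.

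The main delicacy lies in handling (\ref{fail2}): unlike (\ref{fail1}), which follows immediately from the local-routing feasibility property (\ref{lrp4}), ruling out the downstream-blockage failure requires the structural hypothesis (\ref{transt1}) to propagate positivity of capacity from the incoming side of each node to its outgoing side. This monotone accumulation of capacity along a topological order is precisely what makes the acyclicity assumption indispensable to the argument.
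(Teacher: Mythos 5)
Your proof is correct and follows essentially the same route as the paper's: the paper rules out a hypothetical first link failure by contradiction using (\ref{lrp4}) against the overload condition (\ref{fail1}) and the hypothesis (\ref{transt1}) against the downstream-blockage condition (\ref{fail2}), which is just the contrapositive packaging of your forward induction on $t$ with the invariants $C(t)=C$, $f(t)\leq C$, and $\mu_v(t)\leq \bold{1}^\mathsf{T}C_{\mathcal{E}_v^+}$. The convergence step is likewise the same in both arguments, with your node-by-node stabilization times $T_v$ corresponding to the paper's uniform bound $l_{max}(\mathcal{V}_O,\mathcal{V}_D)$ on the longest origin-to-destination path.
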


\begin{proof}
 First, we show that, for any local routing policy $\mathcal{R}$, (\ref{transt2}) and (\ref{transt1}) together imply $C(t)=C$ for all $t\geq0$ under (\ref{dyn0})-(\ref{dyn2}). For the sake of contradiction, let $e=(v,w)$ be one of the first links to fail, i.e., 
 \begin{equation}
\label{transt3a}
C_e>0, \;C_e(\tau)=0,
\end{equation} 
 where
\begin{equation}
\label{transt3}
\tau =  \min(\{t \geq 0 \mid C(t) \neq C\}).
\end{equation} 
Then, either (\ref{fail1}) or (\ref{fail2}) holds for $e$ at ${t=\tau-1}$. 

1) Suppose that (\ref{fail1}) holds. Then, (\ref{lrp2}) and (\ref{lrp4}) imply that ${\mu_v(\tau-1)> \bold{1}^\mathsf{T}  C_{\mathcal{E}_v^+}}(\tau-1)$, and, due to  (\ref{transt3}), ${C_{\mathcal{E}_v^+}(\tau-1)= C_{\mathcal{E}_v^+}}$. Hence, in light of (\ref{dyn1}) and (\ref{transt2}), ${v\notin \mathcal{V}_O}$. As such, $v\in \mathcal{V}_I$ and, due to (\ref{transt1}), ${\mu_v(\tau-1)> \bold{1}^\mathsf{T}  C_{\mathcal{E}_v^+}}$  implies
\begin{equation}
\label{transt33}
\mu_v(\tau-1)=\sum_{e\in \mathcal{E}_v^- }f_e(\tau-1) > \bold{1}^\mathsf{T}  C_{\mathcal{E}_v^-}.
\end{equation} 
Note that  (\ref{transt33}) contradicts with (\ref{dyn2}) and (\ref{fail1}), which together imply 
\begin{equation}
\label{transt333}
f_e(t) \leq C_e(t) \leq C_e, \forall e \in \mathcal{E}, \forall t \geq 0. 
\end{equation} 

2) Suppose that (\ref{fail2}) holds. Then, either $\bold{1}^\mathsf{T} C_{\mathcal{E}_w^+} = 0$ or there exists some $\tau'<\tau$ such that $\bold{1}^\mathsf{T} C_{\mathcal{E}_w^+}(\tau'-1)>0$ and $\bold{1}^\mathsf{T} C_{\mathcal{E}_w^+}(\tau')=0$. Note that the latter contradicts with (\ref{transt3}). Furthermore, if $\bold{1}^\mathsf{T} C_{\mathcal{E}_w^+} = 0$, then (\ref{transt1}) implies that $C_e=0$, which contradicts with (\ref{transt3a}). 

Consequently, both (\ref{fail1}) and (\ref{fail2}) lead to contradiction, and $C(t) = C$ for all $t\geq 0$. In that case, since $\mathcal{G}$ is acyclic, (\ref{lrp2}) and
(\ref{dyn0})-(\ref{dyn2}) together imply 
\begin{equation}
\label{transt9}
f(t+1)=  f(t)  \in  \mathcal{F}_\mathcal{G} (C,\lambda ), \; \forall t\geq l_{max}(\mathcal{V}_O,\mathcal{V}_D),
\end{equation} 
where $l_{max}(\mathcal{V}_O,\mathcal{V}_D)$ denotes the length of the longest feasible path on $\mathcal{G}$ from any origin node to any destination node. Consequently, for any acyclic $\mathcal{G}$, if $C$ and $\lambda$ satisfy (\ref{transt2}) and (\ref{transt1}), then $(\mathcal{G}, C, \mathcal{R}, \lambda)$ is safely transferring for any local routing policy $\mathcal{R}$.

\end{proof}

\subsection{Cascading Failures Induced by More Capacity}
\textcolor{black}{
Our next result shows that increasing the link capacities can lead to a non-transferring system for a broad family of locally routed network flows. Specifically, having more capacity can cause a systemic failure if the network has a node ${v \in \mathcal{V}}$ such that 1) $v$ has outgoing links to multiple nodes, at least one of which is an intermediate node, and 2) increasing the capacities of any links in $\mathcal{E}_v^+$ would reduce the amount of flow routed to the other operational links in $\mathcal{E}_v^+$ whose capacities remain the same. We say that a local routing policy is capacity-monotone at $v$ if it satisfies the second property. For instance, the proportional routing (see (\ref{propor})) is capacity-monotone at every non-destination node. }
%increasing the capacity of any subset of links in $\mathcal{E}_v^+$ would reduce the amount of inflow routed to the other active links in $\mathcal{E}_v^+$, i.e.,

% For instance, the proportional routing (see (\ref{propor})) is capacity-monotone at every non-destination node. }
%increasing the capacity of any subset of links in $\mathcal{E}_v^+$ would reduce the amount of inflow routed to the other active links in $\mathcal{E}_v^+$, i.e.,

\begin{definition} (\emph{Capacity-Monotone Local Routing}):  %locally responsive routing
 A local routing policy $\mathcal{R}$ is capacity-monotone at node $v$ if 
\begin{equation}
\label{cib1}
 \mathcal{R}^v_e(\tilde{C}_{\mathcal{E}_v^+},\mu_v) <\mathcal{R}^v_{e}(C_{\mathcal{E}_v^+},\mu_v), \;  \forall  e \in \mathcal{E}_v^+:\tilde{C}_e =C_e >0,
\end{equation}
for every ${\mu_v  \in \mathbb{R}_{++}}$ and ${\tilde{C}_{\mathcal{E}_v^+}\geq C_{\mathcal{E}_v^+}  \in \mathbb{R}_{+}^{\mathcal{E}_v^+}}$ such that $\tilde{C}_{\mathcal{E}_v^+} \neq C_{\mathcal{E}_v^+}$.
\end{definition}

\begin{theorem}
\label{capincbad}
%, for any $C_{\mathcal{E}_v^+} \in \mathbb{R}_{+}^{\mathcal{E}_v^+} $ and $\mu_v  \in \mathbb{R}_{+} $,
Let $\mathcal{G}= (\mathcal{V},\mathcal{E})$ be an acyclic flow network, and let $\mathcal{R}$ be a local routing policy. There exist link capacities ${\tilde{C}  \geq C \in \mathbb{R}_{+}^\mathcal{E}}$ such that $(\mathcal{G}, C, \mathcal{R}, \lambda)$ is transferring but $(\mathcal{G}, \tilde{C}, \mathcal{R}, \lambda)$ is non-transferring for some $\lambda \in \mathbb{R}_{+}^{\mathcal{V}_O}$, if there exists $v\in \mathcal{V}$ such that \textcolor{black}{it has outgoing links to multiple nodes, at least one of which is an intermediate node (i.e., $|\mathcal{V}_v^+| \geq 2$, $ \mathcal{V}_v^+ \not \subseteq \mathcal{V}_D$), and $\mathcal{R}$ is capacity-monotone at $v$. } 
\end{theorem}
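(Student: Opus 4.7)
The plan is to exhibit an external inflow $\lambda$ and capacity vectors $C \leq \tilde{C}$ for which Lemma~\ref{transt} certifies that $(\mathcal{G}, C, \mathcal{R}, \lambda)$ is safely transferring, while the perturbed system $(\mathcal{G}, \tilde{C}, \mathcal{R}, \lambda)$ cannot converge to a balanced flow at steady state. The key mechanism is capacity-monotonicity at $v$: bumping a single outgoing link of $v$ upward by a small amount $\epsilon$ forces a strict reallocation of $\mu_v$ among $\mathcal{E}_v^+$, which will be arranged to overload the chosen intermediate downstream neighbor $w$.

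For the construction, I would use the hypotheses to pick an intermediate neighbor $w \in \mathcal{V}_v^+ \cap \mathcal{V}_I$, a distinct neighbor $u \in \mathcal{V}_v^+ \setminus \{w\}$, and links $e_w = (v,w), e_u = (v,u) \in \mathcal{E}$. Because $\mathcal{G}$ is acyclic, some origin $o$ admits a path to $v$; because every node has a path to a destination, both $w$ and $u$ admit downstream paths to destinations. Pushing one unit of flow from $o$ through $e_w$ to a destination and one unit through $e_u$ to a destination, aggregating flows on any shared links, produces a balanced flow $f^\circ$ satisfying $f^\circ_{e_w} = f^\circ_{e_u} = 1$ and $f^\circ_e = 0$ for every other $e \in \mathcal{E}_v^+$. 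Set $C := f^\circ$ and $\lambda_{o'} := \bold{1}^\mathsf{T} f^\circ_{\mathcal{E}_{o'}^+}$ for each origin $o'$. Balance of $f^\circ$ then yields $\bold{1}^\mathsf{T} C_{\mathcal{E}_x^-} = \bold{1}^\mathsf{T} C_{\mathcal{E}_x^+}$ for every $x \in \mathcal{V}_I$ and $\lambda_{o'} = \bold{1}^\mathsf{T} C_{\mathcal{E}_{o'}^+}$ for every origin, so Lemma~\ref{transt} declares $(\mathcal{G}, C, \mathcal{R}, \lambda)$ safely transferring.

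Next, fix $\epsilon \in (0,1)$ and set $\tilde{C}_{e_w} = 1 + \epsilon$, $\tilde{C}_e = C_e$ elsewhere. A limit $(C^\ast, f^\ast)$ of the perturbed dynamics exists because each $C_e(t)$ is nonincreasing with values in $\{0, \tilde{C}_e\}$ and hence stabilizes in finitely many steps, after which the acyclic topology lets $f(t)$ settle. Assuming for contradiction $f^\ast \in \mathcal{F}_\mathcal{G}(C^\ast, \lambda)$, balance at $o$ forces its outgoing links to carry the full $\lambda_o$, and balance at each intermediate node along the unchanged upstream path then forces outflow to equal inflow; since only one outgoing link at each such node has positive capacity, no upstream link can have failed, so $\mu_v^\ast = 2$. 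If $C^\ast_{e_w} = 0$ or $C^\ast_{e_u} = 0$, then $\bold{1}^\mathsf{T} C^\ast_{\mathcal{E}_v^+} \leq 1 + \epsilon < 2 = \mu_v^\ast$, whence $\sum_{e \in \mathcal{E}_v^+} f^\ast_e \leq \bold{1}^\mathsf{T} C^\ast_{\mathcal{E}_v^+} < \mu_v^\ast$, contradicting balance at $v$. Otherwise $C^\ast_{\mathcal{E}_v^+} = \tilde{C}_{\mathcal{E}_v^+}$, and capacity-monotonicity at $v$ applied with $\tilde{C}_{\mathcal{E}_v^+} \geq C_{\mathcal{E}_v^+}$ and $\tilde{C}_{e_u} = C_{e_u} = 1 > 0$ gives $\mathcal{R}^v_{e_u}(\tilde{C}_{\mathcal{E}_v^+}, 2) < 1$; combined with (\ref{lrp2}) and (\ref{lrp4}) (which forces zero routing onto the zero-capacity outgoing links of $v$), this yields $f^\ast_{e_w} = \mathcal{R}^v_{e_w}(\tilde{C}_{\mathcal{E}_v^+}, 2) > 1$, hence $\mu_w^\ast > 1 \geq \bold{1}^\mathsf{T} C^\ast_{\mathcal{E}_w^+}$, which contradicts balance at $w$.

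The main obstacle I anticipate lies in the construction of $f^\circ$: ensuring the existence of a balanced flow with precisely one unit on $e_w$ and $e_u$ and zero on the other outgoing links of $v$ in a general acyclic multi-graph where the upstream-to-$v$ path and the downstream-from-$w,u$ paths may share links or nodes. Once any such $f^\circ$ is obtained, the choice $C := f^\circ$ makes Lemma~\ref{transt}'s hypotheses hold automatically, and the contradiction argument above closes the proof.
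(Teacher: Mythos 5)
Your overall strategy is the same as the paper's: build a tight capacity vector equal to a feasible balanced flow so that Lemma~\ref{transt} certifies safe transfer, then raise the capacity of a link from $v$ into an intermediate out-neighbor $w$ and use capacity-monotonicity to push more than $\bold{1}^\mathsf{T} C_{\mathcal{E}_w^+}$ into $w$. However, there is a genuine gap, and it is not the one you flagged. The existence of $f^\circ$ is unproblematic: a nonnegative combination of origin-to-destination path flows is automatically balanced, and acyclicity prevents $P_0$, $P_w$ from revisiting $v$ or $w$. The real problem is your \emph{choice} of $w$ and of the path $P_u$. If $P_u$ passes through $w$ --- which can be unavoidable, since every path from $u$ to a destination may go through $w$ --- then under $C=f^\circ$ the node $w$ has total outgoing capacity $2$, not $1$, and the diverted flow simply recombines there: $\mu_w^\ast = f^\ast_{e_w} + f^\ast_{e_u} = 2 = \bold{1}^\mathsf{T} C_{\mathcal{E}_w^+}$, so no overload occurs and your final contradiction evaporates. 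Concretely, take $\mathcal{V}=\{v,u,w,d\}$ with edges $(v,w),(v,u),(u,w),(w,d)$, $\lambda_v=2$, and proportional routing: your construction gives $C_{(v,w)}=C_{(v,u)}=C_{(u,w)}=1$, $C_{(w,d)}=2$, and after setting $\tilde{C}_{(v,w)}=1+\epsilon$ the perturbed system is still safely transferring, because the shortfall on $(v,u)$ and the surplus on $(v,w)$ cancel at $w$.

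The paper closes exactly this hole by choosing $w=\min(\mathcal{V}_v^+)$ under the topological labeling (\ref{monlab1}), which yields $\mathcal{V}_w^- \cap \mathcal{V}_R(v)=\emptyset$ (equation (\ref{cib5})): no node downstream of $v$ can feed back into $w$, so the inflow into $w$ from links other than $\mathcal{E}_v^+\cap\mathcal{E}_w^-$ is unchanged by the perturbation while the inflow from $v$ strictly increases, forcing the overload. (In the example above this picks $u$, not $w$, as the node to overload, and the cascade then does occur.) Your argument can be repaired by the same device --- choose the out-neighbor of $v$ that is not reachable from any other out-neighbor, which exists by acyclicity, and check it is intermediate --- but as written the proof does not go through. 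A secondary remark: the paper also needs no separate argument that such a $w$ is intermediate because the hypothesis $\mathcal{V}_v^+\not\subseteq\mathcal{V}_D$ combines with the labeling to guarantee it; in your version you must additionally verify that the corrected choice of overloaded neighbor is not a destination, since the contradiction relies on the balance condition at that node.
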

\begin{proof}

Let $\mathcal{G}= (\mathcal{V},\mathcal{E})$ be an acyclic flow network, and let $v\in \mathcal{V}$ be a node such that $|\mathcal{V}_v^+| \geq 2$, $ \mathcal{V}_v^+ \not \subseteq \mathcal{V}_D$ and the local routing policy $\mathcal{R}$ is capacity-monotone at $v$. Consider any $\lambda \in \mathbb{R}_{++}^{\mathcal{V}_O}$ and $C \in \mathbb{R}_{++}^\mathcal{E}$ that satisfies 
\begin{equation}
\label{cib2}
\bold{1}^\mathsf{T}  C_{\mathcal{E}_u^+} = \left\{\begin{array}{ll} \lambda_u, &\mbox{ if } u\in\mathcal{V}_O, \\ \bold{1}^\mathsf{T}  C_{\mathcal{E}_u^-}, &\mbox{ if } u\in\mathcal{V}_I. \end{array}\right. 
\end{equation}
Since $\mathcal{G}$ is acyclic, an example of such $C \in \mathbb{R}_{++}^\mathcal{E}$ can be constructed iteratively by assigning some $C_{\mathcal{E}_u^+}> \bold{0}$ satisfying (\ref{cib2}) for ${u=1, 2, \hdots, \max(\mathcal{V} \setminus \mathcal{V}_D)}$. Let ${w = \min(\mathcal{V}_v^+)}$. Since the node labels satisfy (\ref{monlab1}) and ${\mathcal{V}_v^+ \not \subseteq \mathcal{V}_D}$, we have $w \in \mathcal{V}_I$. Consider any $\tilde {C}\geq C$ such that
\begin{equation}
\label{cib3}
\tilde{C}_e \left\{\begin{array}{ll}  = C_e, &\mbox{ if } e \notin  \mathcal{E}_w^- \cap \mathcal{E}_v^+, \\  > C_e, & \mbox{ otherwise. } \end{array}\right.
\end{equation}
In the remainder, we use $f(t),C(t),\mu(t)$ and $\tilde{f}(t),\tilde{C}(t), \tilde{\mu}(t)$ to denote the corresponding trajectories for $(\mathcal{G}, C, \mathcal{R}, \lambda)$ and $(\mathcal{G}, \tilde{C}, \mathcal{R}, \lambda)$, respectively.  We will show that $(\mathcal{G}, C, \mathcal{R}, \lambda)$ is transferring and $(\mathcal{G}, \tilde{C}, \mathcal{R}, \lambda)$ is non-transferring.

%We will show that $(\mathcal{G}, C, \mathcal{R}, \lambda)$ is safely transferring whereas $(\mathcal{G}, \tilde{C}, \mathcal{R}, \lambda)$ is non-transferring.

In light of Lemma \ref{transt}, (\ref{cib2}) implies that $(\mathcal{G}, C, \mathcal{R}, \lambda)$ is safely transferring. Furthermore,
\begin{equation}
\label{cib3b}
\sum_{e  \in \mathcal{E}_{\{1,2, \hdots, u\}} ^+} C_e=  \sum _{k \in \{1,2, \hdots, u\}}\lambda_k, \forall u \in \mathcal{V} \setminus \mathcal{V}_D.
\end{equation}
Hence, due to (\ref{fbu}) and (\ref{cib3b}), there is only one feasible balanced flow for this configuration, i.e.,  $\mathcal{F}_\mathcal{G} (C,\lambda ) = \{C\}$. As such, since $\mathcal{G}$ is acyclic, (\ref{lrp2}) and
(\ref{dyn0})-(\ref{dyn2}) together imply 
% the limiting state is $(C^*,f^*)=(C,C)$. 
\begin{equation}
\label{cib3c}
(C(t),f(t))= (C,C), \; \forall t\geq l_{max}(\mathcal{V}_O,\mathcal{V}_D),
\end{equation} 
where $l_{max}(\mathcal{V}_O,\mathcal{V}_D)$ denotes the length of the longest feasible path on $\mathcal{G}$ from any origin node to any destination node.

%(\ref{dyn1}) and (\ref{dyn2}) imply

Next, we will show that $(\mathcal{G}, \tilde{C}, \mathcal{R}, \lambda)$ is non-transferring.  To this end, we first show that $(\mathcal{G}, \tilde{C}, \mathcal{R}, \lambda)$ is not safely transferring. For the sake of contradiction, assume that $\tilde{C}(t) = \tilde{C}$ for all $t \geq 0$. In that case, since $\tilde{C}_{\mathcal{E}_u^+}(t) =  C_{\mathcal{E}_u^+}(t)=C_{\mathcal{E}_u^+}$ for all $u \in \mathcal{V}\setminus \{v\}$ and routing only depends \textcolor{black}{on the local inflows and outgoing capacities as in (\ref{lrp1})}, the dynamics in (\ref{dyn0})-(\ref{dyn2}) lead to
\begin{equation}
\label{cib4}
\tilde{\mu}_u(t) = \mu_u(t), \forall u \in  \mathcal{V} \setminus \mathcal{V}_R(v),
\end{equation}
where $\mathcal{V}_R(v)$ denotes the set of nodes that can be reached from $v$ through a non-trivial (traversing at least one edge) directed path on $\mathcal{G}$. Furthermore, since $\mathcal{G}$ is acyclic and ${w = \min(\mathcal{V}_v^+)}$, (\ref{monlab1}) implies
\begin{equation}
\label{cib5}
\mathcal{V}_w^- \cap \mathcal{V}_R(v) = \emptyset.
\end{equation}
Hence, due to (\ref{cib4}) and (\ref{cib5}),
\begin{equation}
\label{cib6}
\tilde{\mu}_u(t) = \mu_u(t), \forall u \in  \mathcal{V}_w^-.
\end{equation}
Due to (\ref{cib2}), (\ref{cib3c}), and (\ref{cib6}), there exists some $\tau \geq 0$ such that
\begin{equation}
\label{cib6b}
\tilde{\mu}_u(\tau) = \bold{1}^\mathsf{T}  C_{\mathcal{E}_u^+}, \forall u \in  \mathcal{V}_w^-.
\end{equation}
%Due to (\ref{lrp4}) and (\ref{cib6b}),
Accordingly, 
\begin{equation}
\label{cib7}
\tilde{\mu}_w(\tau+1) =  \sum_{e \in \mathcal{E}_w^- \setminus \mathcal{E}_v^+} C_e+ \sum_{e \in \mathcal{E}_w^- \cap \mathcal{E}_v^+} \mathcal{R}^v_e(\tilde{C}_{\mathcal{E}_v^+},\bold{1}^\mathsf{T}  C_{\mathcal{E}_v^+}).
\end{equation}
Since $C \in \mathbb{R}_{++}^\mathcal{E}$,  (\ref{cib3}) implies $\tilde{C}_e =C_e >0$ for all ${e\in \mathcal{E}_v^+ \setminus \mathcal{E}_w^-}$. Hence, since $\mathcal{R}$ is capacity-monotone at $v$,  
\begin{equation}
\label{cib8}
 \mathcal{R}^v_e(\tilde{C}_{\mathcal{E}_v^+},\bold{1}^\mathsf{T}  C_{\mathcal{E}_v^+})<\mathcal{R}^v_e(C_{\mathcal{E}_v^+},\bold{1}^\mathsf{T}  C_{\mathcal{E}_v^+})  =  C_e, \forall  e \in \mathcal{E}_v^+ \setminus \mathcal{E}_w^-.
\end{equation}
Note that $\mathcal{E}_v^+ \setminus \mathcal{E}_w^- \neq \emptyset$ since $|\mathcal{V}_v^+| \geq 2$. As such, (\ref{lrp2}) and (\ref{cib8}) together imply 
\begin{equation}
\label{cib8b}
 \sum_{e \in \mathcal{E}_w^- \cap \mathcal{E}_v^+} \mathcal{R}^v_e(\tilde{C}_{\mathcal{E}_v^+},\bold{1}^\mathsf{T}  C_{\mathcal{E}_v^+})> \sum_{e \in \mathcal{E}_w^- \cap \mathcal{E}_v^+} C_e.
\end{equation}

Due to (\ref{cib7}) and (\ref{cib8b}), $\tilde{\mu}_w(\tau+1)>  \bold{1}^\mathsf{T}  C_{\mathcal{E}_w^-}$. Furthermore, in light of (\ref{cib2}) and (\ref{cib3}),  ${\bold{1}^\mathsf{T}  C_{\mathcal{E}_w^-}= \bold{1}^\mathsf{T}  C_{\mathcal{E}_w^+}=\bold{1}^\mathsf{T}  \tilde{C}_{\mathcal{E}_w^+} }$. Hence, ${\tilde{\mu}_w(\tau+1)>\bold{1}^\mathsf{T}  \tilde{C}_{\mathcal{E}_w^+}}$ and, due to (\ref{lrp2}), at least one link in $\mathcal{E}_w^+$ fails at $t=\tau+2$. Accordingly, ${\bold{1}^\mathsf{T}  \tilde{C}_{\mathcal{E}_w^+}(t)< \bold{1}^\mathsf{T}  \tilde{C}_{\mathcal{E}_w^+}}$ for all ${t \geq \tau+2}$ under (\ref{dyn2}). As such, $(\mathcal{G}, \tilde{C}, \mathcal{R}, \lambda)$ is not safely transferring. Since $\bold{1}^\mathsf{T}  \tilde{C}_{\mathcal{E}_u^+}= \lambda_u$ for all $u\in \mathcal{V}_O$ and $\bold{1}^\mathsf{T}  \tilde{C}_{\mathcal{E}_u^+}=\bold{1}^\mathsf{T}  \tilde{C}_{\mathcal{E}_u^-}$ for all $u \in \mathcal{V}_I \setminus \{w\}$, the first link to fail indeed has to be in $\mathcal{E}_w^+$. Moreover,  due to (\ref{cib3}) and (\ref{cib3b}), 

\begin{equation}
\label{cib9}
\bold{1}^\mathsf{T}  \tilde{C}_{\mathcal{E}_w^+}(t)< \bold{1}^\mathsf{T}  \tilde{C}_{\mathcal{E}_w^+} \Rightarrow \sum_{e  \in \mathcal{E}_{\{1,2, \hdots, w\}} ^+} \tilde{C}_e(t) <  \sum _{k \in \{1,2, \hdots, w\}}\lambda_k.
\end{equation}

Hence, (\ref{fbu}) and  (\ref{cib9}) together imply that there is no feasible balanced flow on the remaining network once any of the links in ${\mathcal{E}_w^+}$ fails. Consequently, $(\mathcal{G}, \tilde{C}, \mathcal{R}, \lambda)$ is non-transferring.

\end{proof}

Theorem \ref{capincbad} shows that, for a broad family of locally routed network flows, increasing the link capacities does not necessarily result in a more resilient system.  An example
 where increasing the capacity of a link leads to a non-transferring system is illustrated in Fig. \ref{nrob}. Our next result complements Theorem \ref{capincbad} by characterizing the network topologies, for which increasing the link capacities would never cause a systemic failure under any local routing policy.

\begin{figure*}
\begin{center}
\includegraphics[trim =0mm 0mm 0mm 0mm,clip,scale=0.41]{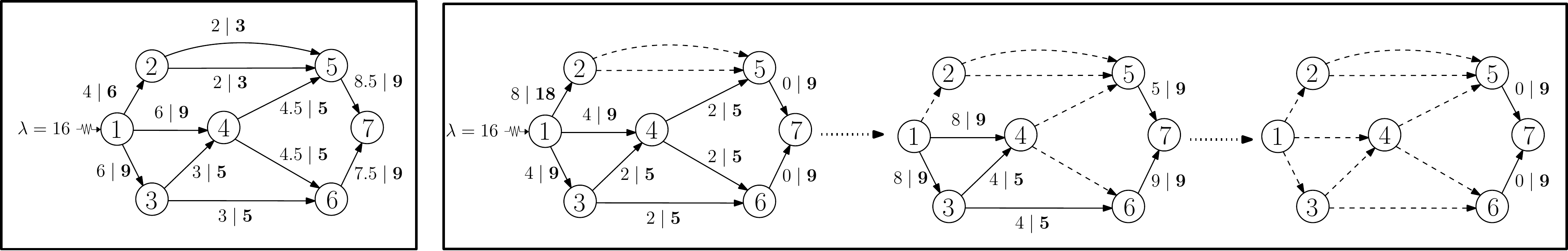}
\caption{A locally routed network flow under the proportional routing is shown. For each operational (solid) edge $e$, $f_e(t) \mid C_e(t)$ is provided next to $e$, and dashed edges denote that the corresponding edge has failed ($f_e(t) = C_e(t) =0$). The limiting state of a safely transferring system is shown on the left. The right side illustrates  some instants from the trajectory of the non-transferring system obtained by only increasing the capacity of the edge $(1,2)$.
%The degree range drops from 5 to 0 within 2942 time steps (marked with a vertical dashed line), after which the system keeps randomizing within $\mathbb{G}^0_{100,3}$. Almost every 3-regular graph is Ramanujan and has algebraic connectivity of at least $3-2\sqrt{2}$ (marked with a vertical solid line).}
}
\label{nrob}
\end{center}
\end{figure*}

\begin{theorem}
\label{capincbad2}
%, for any $C_{\mathcal{E}_v^+} \in \mathbb{R}_{+}^{\mathcal{E}_v^+} $ and $\mu_v  \in \mathbb{R}_{+} $,
Let $\mathcal{G}= (\mathcal{V},\mathcal{E})$ be an acyclic flow network such that it contains no $v\in \mathcal{V}$ \textcolor{black}{with outgoing links to multiple nodes, at least one of which is an intermediate node (i.e., $|\mathcal{V}_v^+| \geq 2$, $ \mathcal{V}_v^+ \not \subseteq \mathcal{V}_D$)}. For any local routing policy $\mathcal{R}$, $\lambda \in \mathbb{R}_{+}^{\mathcal{V}_O}$, and $C \in \mathbb{R}_{+}^\mathcal{E}$, if  $(\mathcal{G}, C, \mathcal{R}, \lambda)$ is transferring, then $(\mathcal{G}, \tilde{C}, \mathcal{R}, \lambda)$ is also transferring for every $\tilde{C}  \geq C$.
\end{theorem}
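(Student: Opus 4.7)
The plan is to exploit the strong topological restriction: every non-destination node $v$ satisfies either $|\mathcal{V}_v^+| \leq 1$ or $\mathcal{V}_v^+ \subseteq \mathcal{V}_D$. A direct consequence is that whenever $u$ is a predecessor of a non-destination node $v$, one must have $\mathcal{V}_u^+=\{v\}$, so every predecessor of a non-destination funnels all of its outflow through edges into that single node. This lets me define recursively a ``demand'' $M_v = \lambda_v + \sum_{u\in \mathcal{V}_v^-} M_u$ at each node (with $\lambda_v=0$ for non-origin nodes), which will serve as the unique amount of flow that $v$ must carry in any balanced flow on this topology.

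Given this, I would argue in three stages. First, by induction on $t$ and on the topological order, the inflow obeys $\tilde{\mu}_v(t) \leq M_v$ in the $\tilde{C}$-system: for non-origin $v$, the funneling and (\ref{lrp2}) give $\tilde{\mu}_v(t+1) = \sum_{u\in \mathcal{V}_v^-}\sum_{e\in \mathcal{E}_u^+} \tilde{f}_e(t+1) \leq \sum_{u} \tilde{\mu}_u(t) \leq \sum_u M_u = M_v$ (failures may only reduce an edge's outflow below its routed value). Second, since $(\mathcal{G},C,\mathcal{R},\lambda)$ is transferring with limit $(C^*,f^*)$, the same funneling applied to the balance identity (\ref{fb}) gives $\mu^*_v = M_v$ and hence $M_v \leq \bold{1}^\mathsf{T} C^*_{\mathcal{E}_v^+} \leq \bold{1}^\mathsf{T} C_{\mathcal{E}_v^+} \leq \bold{1}^\mathsf{T} \tilde{C}_{\mathcal{E}_v^+}$. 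Third, I would induct on $t$ to show that no link fails in the $\tilde{C}$-system: condition (\ref{fail1}) is precluded because $\tilde{\mu}_v(t)\leq M_v \leq \bold{1}^\mathsf{T} \tilde{C}_{\mathcal{E}_v^+}(t)$ together with (\ref{lrp4}) forces each routed flow below its edge capacity, and condition (\ref{fail2}) is precluded because for any non-destination $w$ either $M_w>0$, giving $\bold{1}^\mathsf{T} \tilde{C}_{\mathcal{E}_w^+}(t)\geq M_w > 0$, or $M_w=0$, in which case $\tilde{\mu}_u(t)=0$ at every ancestor $u$ of $w$ and therefore $\tilde{f}_e(t)=0$ on every $e\in \mathcal{E}_w^-$. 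With $\tilde{C}(t)\equiv \tilde{C}$ and $\mathcal{G}$ acyclic, the argument at the end of Lemma \ref{transt} delivers $\tilde{f}(t)\to \tilde{f}^*\in \mathcal{F}_\mathcal{G}(\tilde{C},\lambda)$ within $l_{max}(\mathcal{V}_O,\mathcal{V}_D)$ steps, so $(\mathcal{G},\tilde{C},\mathcal{R},\lambda)$ is (safely) transferring.

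The main obstacle is establishing $\mu^*_v=M_v$ at the equilibrium, since this is exactly what the topology buys and exactly what breaks in Theorem \ref{capincbad}: without the funneling property, a capacity increase at one outgoing link can divert flow away from a parallel link and starve downstream nodes, so no capacity-independent demand $M_v$ exists. A secondary subtlety is that $(\mathcal{G},C,\mathcal{R},\lambda)$ is only assumed transferring rather than safely transferring, so the capacity bound for ruling out failures has to be derived at the post-failure vector $C^*$ and then lifted to $\tilde{C}$ through the chain $C^* \leq C \leq \tilde{C}$.
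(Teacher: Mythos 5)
Your proposal is correct and follows essentially the same route as the paper: your recursively defined demand $M_v$ is exactly the paper's quantity $\sum_{u \in \mathcal{V}_O : v \in \mathcal{V}_R(u)}\lambda_u$, and both arguments exploit the funneling property to show that transferring is equivalent to the capacity-independent conditions $M_v \leq \bold{1}^\mathsf{T} C_{\mathcal{E}_v^+}$, which are monotone in $C$. Your treatment is somewhat more explicit about the induction on $t$ and about lifting the bound from the post-failure capacities $C^*$ through $C^* \leq C \leq \tilde{C}$, whereas the paper handles this via the asserted dichotomy that such systems are either non-transferring or safely transferring, but the substance is the same.
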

\begin{proof}

Let $\mathcal{G}= (\mathcal{V},\mathcal{E})$ be an acyclic flow network such that there exists no $v\in \mathcal{V}$ for which $|\mathcal{V}_v^+| \geq 2$ and $ \mathcal{V}_v^+ \not \subseteq \mathcal{V}_D$. Then, for any $v\in \mathcal{V} \setminus \mathcal{V}_D$, either  ${\mathcal{V}_v^+ \subseteq \mathcal{V}_D}$, or a unique sequence of intermediate nodes should be traversed to reach any destination node from $v$. Accordingly, for any $\lambda \in \mathbb{R}_{+}^{\mathcal{V}_O}$ and $C \in \mathbb{R}_{+}^\mathcal{E}$,
(\ref{fbu}) implies that  $\mathcal{F}_\mathcal{G} (C,\lambda ) \neq \emptyset$ if and only if
%\begin{equation}
%\label{propeq0}
%\mu_v(t) = \lambda_v+ \sum_{u \in \mathcal{V}_v^-}  \mu_u(t-1), \forall v \in \mathcal{V}\setminus \mathcal{V}_D,
%\end{equation}
\begin{equation}
\label{propeq1a}
\lambda_v  \leq \bold{1}^\mathsf{T}  C_{\mathcal{E}_v^+}, \forall v \in \mathcal{V}_O,
\end{equation}
\begin{equation}
\label{propeq1b}
 \sum_{u \in \mathcal{V}_O: v \in \mathcal{V}_R(u)}\lambda_u \leq \bold{1}^\mathsf{T}  C_{\mathcal{E}_v^+}, \forall v \in \mathcal{V}_I,
\end{equation}
where $\mathcal{V}_R(u)$ denotes the set of nodes that can be reached from $u$. Since $|\mathcal{V}_v^+|=1$ for every $v\in \mathcal{V} \setminus \mathcal{V}_D$ such that ${\mathcal{V}_v^+ \not \subseteq \mathcal{V}_D}$, any $(\mathcal{G}, C, \mathcal{R}, \lambda)$ satisfying (\ref{propeq1a}) and (\ref{propeq1b}) has
\begin{equation}
\label{propeqmu}
\mu_v(t)= \lambda_v + \sum_{u \in \mathcal{V}_v^-} \mu_u(t-1), \forall v \in \mathcal{V} \setminus \mathcal{V}_D.
\end{equation}
Accordingly, there are two possibilities for any such locally routed network flow $(\mathcal{G}, C, \mathcal{R}, \lambda)$: 1) the system is non-transferring, or  2) the system is safely transferring. More specifically, $(\mathcal{G}, C, \mathcal{R}, \lambda)$ is (safely) transferring if and only if (\ref{propeq1a}) and (\ref{propeq1b}) hold. Since
\begin{equation}
\label{propeq2}
\bold{1}^\mathsf{T}  \tilde{C}_{\mathcal{E}_v^+} \geq \bold{1}^\mathsf{T}  C_{\mathcal{E}_v^+}, \; \forall v \in \mathcal{V}, \forall \tilde{C} \geq C  \in \mathbb{R}_{+}^{\mathcal{E}},
\end{equation}
we obtain that, for any such network $\mathcal{G}$, ${\tilde{C} \geq C  \in \mathbb{R}_{+}^{\mathcal{E}}}$, ${\lambda \in \mathbb{R}_+^{\mathcal{V}_O} }$, and local routing policy $\mathcal{R}$, if $(\mathcal{G}, C, \mathcal{R}, \lambda)$ is transferring, then $(\mathcal{G}, \tilde{C}, \mathcal{R}, \lambda)$ is also transferring.
\end{proof}

\begin{remark}: Theorems \ref{capincbad} and \ref{capincbad2} together provide an exact characterization of network topologies for which a systemic failure can be induced under local routing by increasing the link capacities. For instance, in networks with a single destination node, having no $v\in \mathcal{V}$ such that $|\mathcal{V}_v^+| \geq 2$ and $ \mathcal{V}_v^+ \not \subseteq \mathcal{V}_D$ is equivalent to having $|\mathcal{V}_v^+| = 1$ for all ${v\in \mathcal{V} \setminus \mathcal{V}_D}$. Hence, for acyclic single-destination networks, it follows from Theorems \ref{capincbad} and \ref{capincbad2} that increasing the link capacities can not make the system non-transferring under any local routing policy if and only if the network is a directed tree (possibly with parallel edges) rooted at the destination node as in Fig. \ref{dipath}. For all other single-destination acyclic networks, increasing the link capacities may cause a systemic failure under some local routing policies (e.g., capacity-monotone policies) as shown in Theorem \ref{capincbad}.
\end{remark}

\begin{figure}[htb]
\begin{center}
\includegraphics[trim =0mm 0mm 0mm 0mm,clip,scale=0.58]{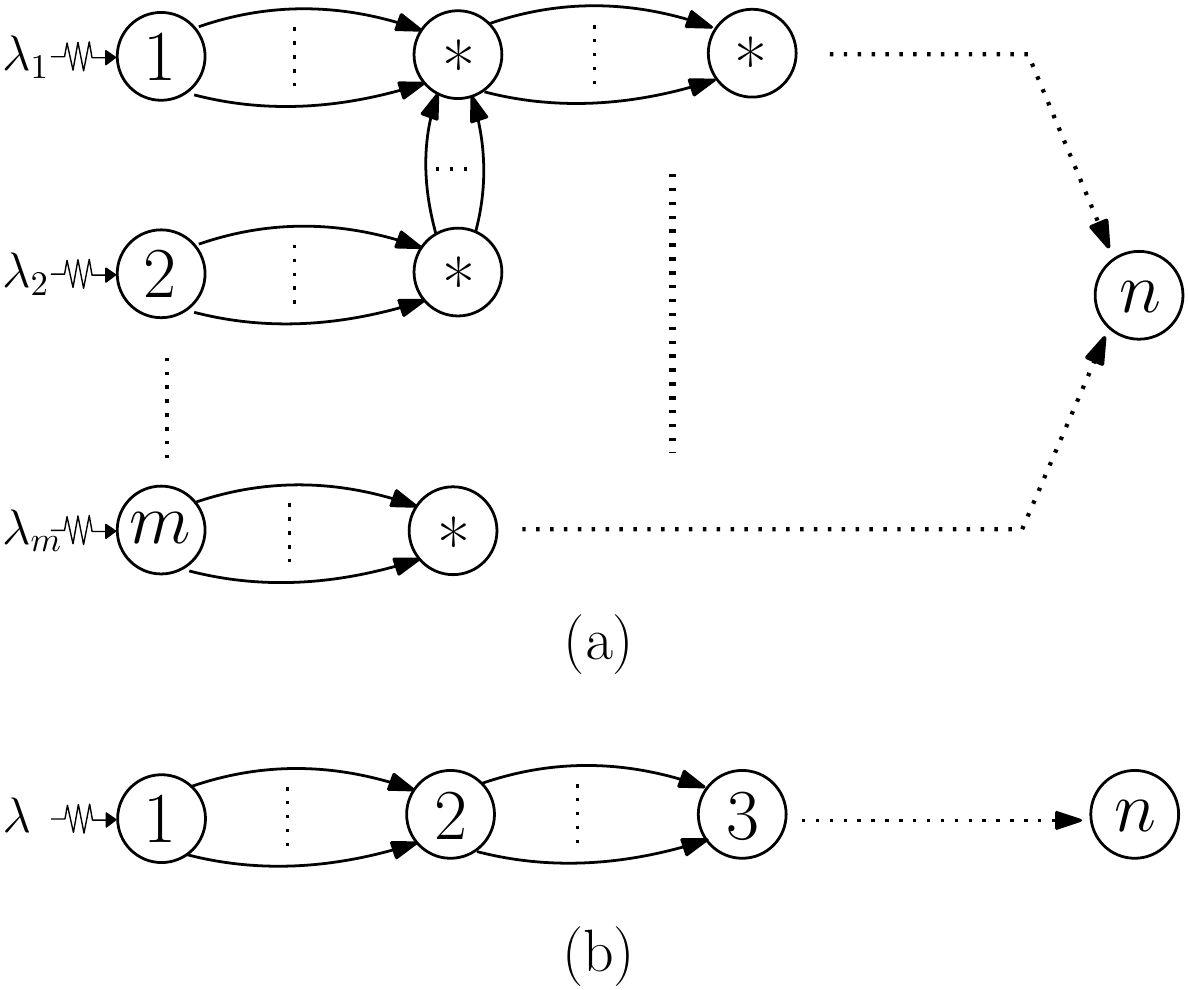}
\caption{Flow networks with a directed tree (possibly with parallel edges) topology are illustrated for the multi-origin and single-origin cases in (a) and (b) respectively. For each origin node of such a network, a unique sequence of nodes has to be traversed to reach the destination node. For acyclic networks with a single destination node, increasing the link capacities can not make the system non-transferring under any local routing policy if and only if the network has such a topology.}
\label{dipath}
\end{center}
\end{figure}

 \subsection{Avoiding Cascading Failures via Capacity Reduction}

% possible solutions are not limited to capacity 

In general, any non-transferring locally routed network flow $(\mathcal{G}, C, \mathcal{R}, \lambda)$ belongs to one of the two categories based on the existence of feasible balanced flows: 1) $\mathcal{F}_\mathcal{G} (C,\lambda ) = \emptyset$, and 2) $\mathcal{F}_\mathcal{G} (C,\lambda) \neq \emptyset$.  Note that if $\mathcal{F}_\mathcal{G} (C,\lambda ) = \emptyset$, then it is not possible to transfer the external inflow $\lambda$ without increasing the link capacities $C$ since (\ref{fb}) implies
\begin{equation}
\label{capmon}
\mathcal{F}_\mathcal{G} (\tilde{C},\lambda)\subseteq \mathcal{F}_\mathcal{G} (C,\lambda), \;  \forall C \geq \tilde{C} \in \mathbb{R}_{+}^\mathcal{E}, \forall \lambda \in \mathbb{R}_{+}^{\mathcal{V}_O}.
\end{equation} 
%One example of such an approach is the use of GPS-based route guidance systems in transportation to provide the drivers with real-time traffic data. 
%In this regard, capacity reductions correspond to limitations on the use of available capacity.

On the other hand,  if $\mathcal{F}_\mathcal{G} (C,\lambda) \neq \emptyset$, then there exist some feasible balanced flows, however none of them emerges under the local routing policy $\mathcal{R}$. In this case, the systemic failure originates from the local routing decisions rather than the discrepancy between the external inflow and the network capacity. 

Systemic failures due to local routing decisions may potentially be avoided by making the global network state accessible. For instance, GPS-based route guidance systems are used in transportation to provide the drivers with real-time traffic data. 
However, in distributed systems, the availability of more information does not guarantee the desired global behavior in the absence of a properly coordinated response (e.g., \cite{Arnott91,Acemoglu16,Roozbehani12}). In Theorem \ref{redt}, we will show that the failures due to local routing decisions can actually be avoided by appropriately reducing the link capacities. For instance, such capacity reductions can be realized by controlling the speed limits and traffic lights in transportation networks. 

For any acyclic flow network $\mathcal{G}= (\mathcal{V},\mathcal{E})$ and $C \in \mathbb{R}_{+}^\mathcal{E}$, let $\Lambda(\mathcal{G},C)$ denote set of external inflows for which there exists a feasible balanced flow, i.e., 
\begin{equation}
\label{tlambda}
\Lambda(\mathcal{G},C)=\{\lambda \in \mathbb{R}_{+}^{\mathcal{V}_O} \mid \mathcal{F}_\mathcal{G} (C,\lambda) \neq \emptyset \}.
\end{equation}
\textcolor{black}{ An acyclic flow network $\mathcal{G}$ with the link capacities $C$ is free of any link failures induced by local routing decisions if $(\mathcal{G}, C, \mathcal{R}, \lambda)$ is safely transferring for any local routing policy $\mathcal{R}$ and $\lambda \in \Lambda(\mathcal{G},C)$. Our next result shows that for any feasible external inflow $\lambda \in\Lambda(\mathcal{G},C)$, there exists $\undertilde{C} \in [\bold{0},C]$ such that the resulting network is free of any link failures due to local routing decisions and $[\bold{0},\lambda] \subseteq \Lambda(\mathcal{G},\undertilde{C})$. } 

%In light of Theorem \ref{redt}, the cascading failures due to local routing decisions can be avoided via some appropriate reduction in the link capacities. As a potential application of this result, 

\begin{theorem}
\label{redt}
For any acyclic flow network  $\mathcal{G}= (\mathcal{V},\mathcal{E})$, link capacities $C \in \mathbb{R}_{+}^\mathcal{E}$, and external inflow ${\lambda \in \Lambda(\mathcal{G},C)}$, there exists \textcolor{black}{$\undertilde{C} \in [\bold{0},C]$} such that $(\mathcal{G}, \undertilde{C}, \mathcal{R}, \lambda')$ is safely transferring for any local routing policy $\mathcal{R}$ and \textcolor{black}{${\lambda' \in \Lambda(\mathcal{G},\undertilde{C}) \supseteq [\bold{0},\lambda]}$}. 
%$\bold{0} \leq \tilde{\lambda} \leq \lambda$.

\end{theorem}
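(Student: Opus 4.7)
My plan is to exhibit an explicit reduced capacity vector and then invoke Lemma \ref{transt} twice: once to show $[\mathbf{0},\lambda]\subseteq\Lambda(\mathcal{G},\undertilde{C})$, and once more to deduce safe transfer for every $\lambda'\in\Lambda(\mathcal{G},\undertilde{C})$. The crucial observation is that the sufficient condition (\ref{transt1}) of Lemma \ref{transt} is a pure statement about capacities, so if I can make it an \emph{equality} at every intermediate node, then the only remaining requirement to check for any inflow $\lambda'$ is the origin bound (\ref{transt2}), which is monotone in $\lambda'$.

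The construction is to set $\undertilde{C}:=f^{\star}$ for some $f^{\star}\in\mathcal{F}_{\mathcal{G}}(C,\lambda)$, which exists by the hypothesis $\lambda\in\Lambda(\mathcal{G},C)$. By feasibility $\undertilde{C}\in[\mathbf{0},C]$, and the balance equations in (\ref{fb}) give $\mathbf{1}^\mathsf{T}\undertilde{C}_{\mathcal{E}_v^-}=\mathbf{1}^\mathsf{T}\undertilde{C}_{\mathcal{E}_v^+}$ for every $v\in\mathcal{V}_I$ and $\lambda_v=\mathbf{1}^\mathsf{T}\undertilde{C}_{\mathcal{E}_v^+}$ for every $v\in\mathcal{V}_O$. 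Thus (\ref{transt1}) of Lemma \ref{transt} holds for $\undertilde{C}$ with equality at every intermediate node, independently of the inflow.

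For the inclusion $[\mathbf{0},\lambda]\subseteq\Lambda(\mathcal{G},\undertilde{C})$, I take any $\lambda'\in[\mathbf{0},\lambda]$ and note $\lambda'_v\leq\lambda_v=\mathbf{1}^\mathsf{T}\undertilde{C}_{\mathcal{E}_v^+}$ at every origin, so (\ref{transt2}) is also satisfied. Lemma \ref{transt} then asserts that $(\mathcal{G},\undertilde{C},\mathcal{R},\lambda')$ is safely transferring for every local routing policy $\mathcal{R}$; in particular its limiting flow vector lies in $\mathcal{F}_{\mathcal{G}}(\undertilde{C},\lambda')$, so $\lambda'\in\Lambda(\mathcal{G},\undertilde{C})$. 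For the other direction, given any $\lambda'\in\Lambda(\mathcal{G},\undertilde{C})$ I pick $f'\in\mathcal{F}_{\mathcal{G}}(\undertilde{C},\lambda')$ and sum over $\mathcal{E}_v^+$ at each origin to obtain $\lambda'_v\leq\mathbf{1}^\mathsf{T}\undertilde{C}_{\mathcal{E}_v^+}$; together with the already-established (\ref{transt1}) on $\undertilde{C}$, Lemma \ref{transt} again delivers safe transfer under any $\mathcal{R}$.

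The only non-routine step is guessing the correct reduction $\undertilde{C}=f^{\star}$; this is where I expect the creativity to sit, because the construction simultaneously certifies feasibility of smaller inflows and rules out the capacity mismatch at intermediate nodes that Theorem \ref{capincbad} exploits to trigger cascades. After that, the argument is a direct unwinding of the definitions of $\mathcal{F}_{\mathcal{G}}$, $\Lambda$, and safely transferring, with the acyclicity required by Lemma \ref{transt} inherited from the hypothesis.
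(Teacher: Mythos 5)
Your proposal is correct and follows the paper's own argument: the paper likewise sets $\undertilde{C}=f$ for a feasible balanced flow $f\in\mathcal{F}_{\mathcal{G}}(C,\lambda)$, reads off the balance equalities at intermediate and origin nodes from (\ref{fb}), and invokes Lemma \ref{transt} to conclude safe transfer for all $\lambda'\in\Lambda(\mathcal{G},\undertilde{C})=[\bold{0},\lambda]$. The only cosmetic difference is that you verify the two inclusions of $\Lambda(\mathcal{G},\undertilde{C})$ separately, whereas the paper states the equivalence $\mathcal{F}_{\mathcal{G}}(\undertilde{C},\lambda')\neq\emptyset\Leftrightarrow\lambda'\in[\bold{0},\lambda]$ in one step.
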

\begin{proof}
For any acyclic flow network  $\mathcal{G}= (\mathcal{V},\mathcal{E})$, link capacities $C \in \mathbb{R}_{+}^\mathcal{E}$, and external inflow $\lambda \in \Lambda(\mathcal{G},C)$, let $\undertilde{C}=f$ for some feasible balanced flow vector $f \in \mathcal{F}_\mathcal{G} (C, \lambda)$. Then, due to (\ref{fb}),
\begin{equation}
\label{redt1}
\bold{0} \leq \undertilde{C} \leq C,
\end{equation}
\begin{equation}
\label{redt2}
\bold{1}^\mathsf{T} \undertilde{C}_{\mathcal{E}_v^-} = \bold{1}^\mathsf{T} \undertilde{C}_{\mathcal{E}_v^+}, \forall v \in  \mathcal{V}_I,
\end{equation} 
\begin{equation}
\label{redt3}
\bold{1}^\mathsf{T} \undertilde{C}_{\mathcal{E}_v^+} = \lambda_v, \; \forall v\in \mathcal{V}_O.
\end{equation}
In light of Lemma \ref{transt}, (\ref{redt2}) and (\ref{redt3}) together imply that $(\mathcal{G}, \undertilde{C}, \mathcal{R}, \lambda')$ is safely transferring for any local routing policy $\mathcal{R}$ and $\lambda' \in [\bold{0},\lambda]$.  \textcolor{black}{Furthermore, due to \eqref{fb}, \eqref{redt2}, and \eqref{redt3},
\begin{equation}
\label{redt4}
\mathcal{F}_\mathcal{G} (\undertilde{C},\lambda') \neq \emptyset  \Leftrightarrow  \lambda'\in [\bold{0},\lambda], \; \forall \lambda'\in \mathbb{R}_{+}^{\mathcal{V}_O}.
 \end{equation}
Consequently, $ \Lambda(\mathcal{G},\undertilde{C}) = [\bold{0},\lambda]$, and $(\mathcal{G}, \undertilde{C}, \mathcal{R}, \lambda')$ is safely transferring for any local routing policy $\mathcal{R}$ and $\lambda'\in \Lambda(\mathcal{G},\undertilde{C})$.}

\end{proof}

%For any given acyclic flow network $\mathcal{G}= (\mathcal{V},\mathcal{E})$ and $C \in \mathbb{R}_{+}^\mathcal{E}$, \eqref{fb} implies 
%\begin{equation}
%\label{tlambda2}
%\lambda \in \Lambda(\mathcal{G},C) \Rightarrow \tilde{\lambda} \in \Lambda(\mathcal{G},C), \; \forall \tilde{\lambda} \leq \lambda \in \mathbb{R}_{+}^{\mathcal{V}_O}.
%\end{equation}
%One application of Theorem \ref{redt} is to find an optimal capacity allocation to ensure the safe transfer of any external inflow that is not larger than some given feasible maximal inflow. More specifically

As a potential application of this result, suppose that for any given acyclic flow network ${\mathcal{G}= (\mathcal{V},\mathcal{E})}$, available link capacities $\bar{C} \in \mathbb{R}_{+}^\mathcal{E}$, and maximal inflow $\bar{\lambda} \in  \Lambda(\mathcal{G},\bar{C})$, it is desired to find some optimal capacity allocation $C \in [\bold{0},\bar{C}]$ that ensures the safe transfer of any external inflow $\lambda \in [\bold{0},\bar{\lambda}]$ under any local routing policy $\mathcal{R}$. In light of Lemma \ref{transt}, such a problem can be formulated as 
 \begin{equation}
 \label{opt}
 \underset{C \in \mathcal{C}_\mathcal{G}(\bar{C},\bar{\lambda})}{\text{min }}   J(C), \end{equation}
where $J(C) : \mathbb{R}_{+} ^\mathcal{E} \mapsto \mathbb{R}$ denotes some cost function, and
\begin{multline}
\label{cset}
\mathcal{C}_\mathcal{G}(\bar{C},\bar{\lambda})=\{C \in \mathbb{R}_{+} ^\mathcal{E} \mid C \leq \bar{C},  \;  \bar{\lambda}_v \leq \bold{1}^\mathsf{T} C_{\mathcal{E}_v^+}, \; \forall v\in \mathcal{V}_O  \\ 
\bold{1}^\mathsf{T} C_{\mathcal{E}_v^-} \leq \bold{1}^\mathsf{T} C_{\mathcal{E}_v^+}, \forall v \in  \mathcal{V}_I\}
\end{multline}
denotes the set of capacities $C \in [\bold{0},\bar{C}]$ that satisfy (\ref{transt2}) and (\ref{transt1}) for any $\lambda \in [\bold{0},\bar{\lambda}]$. In the remainder of this section, we present some examples of such capacity allocation problems.

\emph{Example (Minimal Reduction Problem)}: In some applications, it may be desired to ensure safety through a minimal reduction in the link capacities.  Such an objective may capture the effort needed to realize the capacity reduction, or the influence of reduction on some performance measure. For instance, in transportation networks, higher amounts of capacity reduction may yield higher average delay. One way of formulating the minimal reduction problem is to set
 \begin{equation}
 \label{costj}
 J(C) = \alpha^\mathsf{T} (\bar{C}-C),
 \end{equation}
where $\alpha \in  \mathbb{R}_{+}^\mathcal{E}$ denotes the cost associated with reducing the link capacities. The problem in (\ref{opt}) is a linear program with the cost function in (\ref{costj}).

\emph{Example (Robust Safety Problem)}: In many systems, the network may be subjected to some exogenous disturbances on the link capacities. Accordingly, it may be desired to allocate the link capacities to maximize the amount of disturbance that can be absorbed without losing the safety guarantee. One way of achieving this objective is to maximize the minimum amount of capacity loss that can take the allocated capacities outside the feasible region of (\ref{opt}), i.e.,

\begin{eqnarray}
 \label{robj}
 J(C) & = & - \min_{ \delta \in [\bold{0},C]:   (C - \delta) \notin \mathcal{C}_\mathcal{G}(\bar{C},\bar{\lambda})} \bold{1}^\mathsf{T} \delta \nonumber \\
  & = &  \max_{v \in \mathcal{V}\setminus  \mathcal{V}_D } \bar{\lambda}_v + \bold{1}^\mathsf{T} C_{\mathcal{E}_v^-} - \bold{1}^\mathsf{T} C_{\mathcal{E}_v^+},
\end{eqnarray}
where $\mathcal{C}_\mathcal{G}(\bar{C},\bar{\lambda})$ denotes the feasible region as given in (\ref{cset}). Note that (\ref{opt}) is a convex problem with the cost in (\ref{robj}).

In Fig. \ref{capallocex}, we demonstrate the solutions to the minimal reduction and robust safety problems for a sample flow network. The network $\mathcal{G}=(V,E)$ is shown in Fig. \ref{capallocex}a along with the available link capacities $\bar{C}$ and the maximal external inflow $\bar{\lambda}$ to be transferred. In this example, allocating all the available capacity may lead to a non-transferring system under some external inflow $\lambda \in [\bold{0}, \bar{\lambda}]$ and local routing policy $\mathcal{R}$. For instance, Fig. \ref{capallocex}b shows the resulting equilibrium  under the proportional routing policy for $C=\bar{C}$ and $\lambda=\bar{\lambda}$. The minimal reduction problem with $\alpha= \bold{1}$ leads to an optimal $C \in [\bold{0},\bar{C}]$ with ${\bold{1}^\mathsf{T} (\bar{C}-C)=4}$ as shown in Fig. \ref{capallocex}c, which ensures that $(\mathcal{G}, C, \mathcal{R}, \lambda)$ is safely transferring for any local routing policy $\mathcal{R}$ and $\lambda \in [\bold{0}, \bar{\lambda}]$. The robust safety problem has an optimal $C \in [\bold{0},\bar{C}]$ as shown in Fig. \ref{capallocex}d, which ensures that ${(\mathcal{G}, C-\delta, \mathcal{R}, \lambda)}$ is safely transferring for any local routing policy $\mathcal{R}$, $\lambda \in [\bold{0}, \bar{\lambda}]$, and $\delta \in [\bold{0},C]$ such that ${\bold{1}^\mathsf{T} \delta \leq 0.4}$.

\begin{figure}[htb]
\begin{center}
\includegraphics[trim =0mm 0mm 0mm 0mm,clip,scale=0.86]{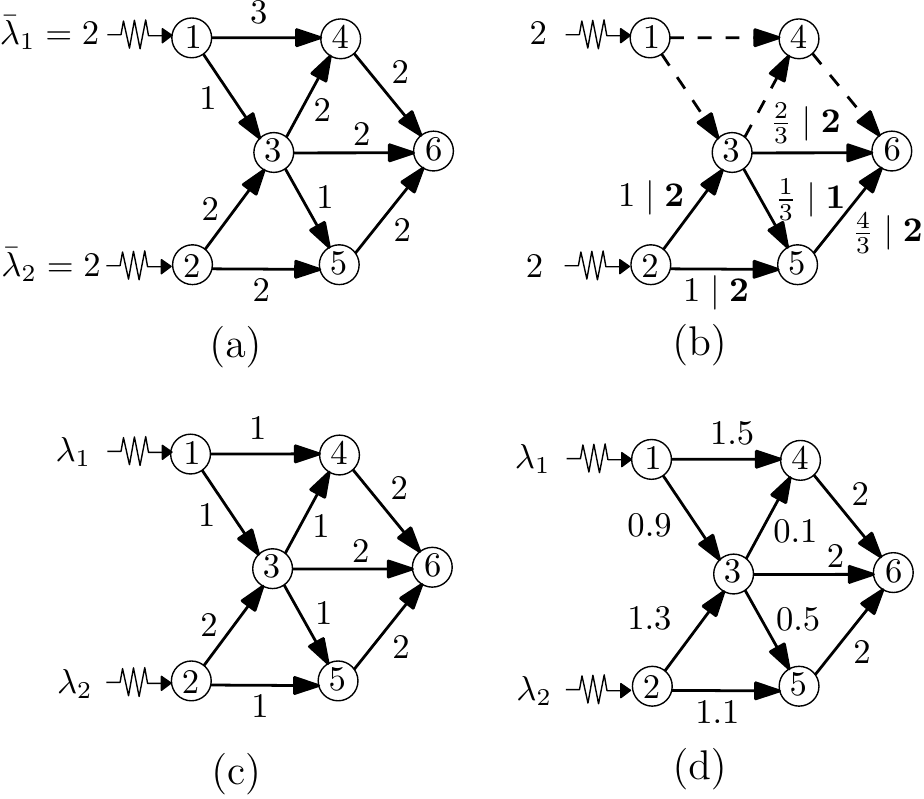}
\caption{An acyclic flow network $\mathcal{G}= (\mathcal{V},\mathcal{E})$ is shown in (a), where the available link capacities $\bar{C}$ are provided next to the edges, and the maximal external inflow to be transferred is $\bar{\lambda}= [2 \; 2]^\mathsf{T}$. The capacities in (a) may lead to a non-transferring system under some external inflow $\lambda \in [\bold{0}, \bar{\lambda}]$ and local routing policy $\mathcal{R}$. For instance, the limiting state $(C^*,f^*)$ under the proportional routing policy for $\lambda=[2 \; 2]^\mathsf{T}$ is shown in (b), where $f^*_e \mid C^*_e$ is provided next to each operational (solid) edge $e$.  A solution to the minimal reduction problem (see (\ref{opt}), (\ref{cset}), (\ref{costj})) for $\alpha=\bold{1}$ is shown in (c), and a solution to the robust safety problem (see (\ref{opt}), (\ref{cset}), (\ref{robj})) is shown in (d). }

\label{capallocex}
\end{center}
\end{figure}

%\begin{figure}[htb]
%\begin{center}
%\includegraphics[trim =0mm 0mm 0mm 0mm,clip,scale=0.83]{example3b.eps}
%\caption{An acyclic flow network $\mathcal{G}= (\mathcal{V},\mathcal{E})$ is shown in (a), where the link capacities are provided next to the edges, and the maximal external inflow to be transferred is $\lambda= [2 \; 2]^\mathsf{T}$. The capacities in (a) may lead to a non-transferring system under some external inflow $\bold{0} \leq \tilde{\lambda} \leq \lambda$ and local routing policy $\mathcal{R}$. For instance, the resulting equilibrium $(C^*,f^*)$ under the proportional routing policy for $\tilde{\lambda}=[2 \; 2]^\mathsf{T}$ is shown in (b), where $f^*_e \mid C^*_e$ is provided next to each active (solid) edge $e$.  A solution to the minimal reduction problem (see (\ref{cset}),(\ref{opt}),(\ref{costj})) for $\alpha=\bold{1}$ is shown in (c), and a solution to the robust safety problem (see (\ref{cset}),(\ref{opt}), (\ref{robj})) is shown in (d). }
%
%
%
%
%\label{capallocex}
%\end{center}
%\end{figure}

\section{Conclusions and Future Work}
\label{conclusion}
In this paper, we showed that the link capacities do not have a monotonic influence on the resilience of locally routed network flows with finite link capacities. The local routing policies were defined as mappings from the local inflows and link capacities to the corresponding equilibrium outflows. Accordingly, a link is overloaded only if its tail receives a total inflow larger than the total capacity of its outgoing links. A link irreversibly fails if it is overloaded, or if there is no operational link in its immediate downstream to carry its flow. We showed that, under such dynamics, increasing the link capacities can cause a systemic failure for a broad family of network topologies and local routing policies. Furthermore, when the external inflow does not exceed the network capacity, the failures that arise from local routing decisions can actually be avoided by appropriately reducing the link capacities.

As a future direction, we plan to build on the results of this paper for designing dynamic capacity allocation policies to avert the systemic failures due to local routing decisions in network flows. Furthermore, we plan to investigate how the availability of more information (e.g., state of the partial downstream) in the routing policies would affect the influence of link capacities on resilience.

%We also plan to extend the model in this paper to incorporate the dynamics associated with the routing decisions and densities on the links, and the exogeneous disturbances on the link capacities.

%of incorporating multi-hop information in the routing policies in mitigating the risk of failures due to routing decisions.

%   be r by better discussing future work in the conclusion section. Immediate issues that come to mind are acknowledging limitations of the model, how you’d strengthen the model, characterizing the role and value of information (local and global are two extreme ends, what’s there in between

\bibliography{MyReferences}
\end{document}